\newtheorem{theorem}{Theorem}
\newtheorem{lemma}{Lemma}
\begin{document}

\newcommand{\XOR}{\otimes}
\newcommand{\BXOR}{\bar{\otimes}}

%
\title{Evolution and Controllability of Cancer Networks: a Boolean Perspective}
%
%
%
%

\author{Sriganesh Srihari,
        Venkatesh Raman,
        Hon Wai Leong
        and Mark A. Ragan
\IEEEcompsocitemizethanks
{\IEEEcompsocthanksitem Sriganesh Srihari is with the Institute for Molecular Bioscience,
The University of Queensland, QLD 4072, Australia.\protect\\
E-mail: s.srihari@uq.edu.au
\IEEEcompsocthanksitem Venkatesh Raman is with The Institute of Mathematical Sciences, CIT Campus Taramani, Chennai 600113, India.\protect\\
E-mail: vraman@imsc.res.in
\IEEEcompsocthanksitem Hon Wai Leong is with the Department of Computer Science, National University of Singapore, Singapore 117590. \protect\\
E-mail: leonghw@comp.nus.edu.sg
\IEEEcompsocthanksitem Mark A. Ragan is with the Institute for Molecular Bioscience, The University of Queensland, QLD 4072, Australia. \protect\\
E-mail: m.ragan@uq.edu.au
}
\thanks{}}

%
%

\markboth{Journal of \LaTeX\ Class Files,~Vol.~6, No.~1, January~2007}%
{Shell \MakeLowercase{\textit{et al.}}: Bare Demo of IEEEtran.cls for Computer Society Journals}
%


\IEEEcompsoctitleabstractindextext{%
\begin{abstract}
Cancer forms a robust system capable of maintaining stable functioning (cell sustenance and proliferation) despite perturbations.
Cancer progresses as stages over time typically with increasing aggressiveness and worsening prognosis. Characterizing these stages and identifying the genes driving transitions 
between them is critical to understand cancer progression and to develop effective anti-cancer therapies.
In this work, we propose a novel model for the `cancer system' as a Boolean state space in which a Boolean network, built from protein-interaction and gene-expression data from 
different stages of cancer, transits between Boolean satisfiability states by ``editing" interactions and ``flipping" genes. 
Edits reflect rewiring of the PPI network while flipping of genes reflect activation or silencing of genes between stages.
We formulate a minimization problem {\sc min flip} to identify these genes driving the transitions.
The application of our model (called BoolSpace) on three case studies -- pancreatic and breast tumours in human and post spinal-cord injury in rats -- 
reveals valuable insights into the phenomenon of cancer progression:
(i) interactions involved in core cell-cycle and DNA-damage repair pathways are significantly rewired in tumours, indicating significant impact to key genome-stabilizing mechanisms;
(ii) several of the genes flipped are serine/threonine kinases which act as biological switches, reflecting cellular switching mechanisms between stages; and
(iii) different sets of genes are flipped during the initial and final stages indicating a pattern to tumour progression.
Based on these results, we hypothesize that robustness of cancer partly stems from ``passing of the baton" between genes at different stages -- 
genes from different biological processes and/or cellular components are involved in different stages of tumour progression thereby allowing tumour cells to evade targeted therapy, and
therefore an effective therapy should target a ``cover set" of these genes.
A C/C++ implementation of BoolSpace is freely available at: \url{http://www.bioinformatics.org.au/tools-data}

\end{abstract}

\begin{keywords}
Cancer networks, Cancer evolution, Cancer robustness, Strategy for targeted therapy
\end{keywords}}

\maketitle

\IEEEdisplaynotcompsoctitleabstractindextext

%
\IEEEpeerreviewmaketitle

\section{Introduction}
A dynamical system is \emph{controllable} if it can be driven from an initial state to a desired state within finite time by application of suitable inputs~\cite{LiuY2011}.
For example, a car is controllable as it can be moved at a desired speed and direction by the manipulation of pedals and steering wheel.
The factors that contribute to the \emph{controllability} of the system can be assembled in the form a network, which in this example is the network of components such as circuits, engine,
wheels, etc. of the car. This prompts the study of \emph{structural controllability of networks} wherein we attempt to identify input nodes (\emph{driver nodes}) 
that control the (entire) network~\cite{LiuY2011}.
This study has applications in understanding biological networks, communication networks, social networks, electrical circuits, etc.

Structural controllability of systems or networks has been studied in several fields, particularly in control systems theory. 
In a classical work~\cite{LinCT1974} (1974), Lin studied linear time-invariant control systems of the form $(A,b)$: $\dot{x} = Ax + bu$, where matrices $A \in R^{n \times n}$ and $b \in R^n$ are 
time invariant and $x \in R^n$ and $u \in R$, and established that the system $(A,b)$ is structurally controllable if and only if the graph of $(A,b)$ is ``spanned by a cactus".

More recently (2011-) great interest has been generated on the structural controllability of real-world networks~\cite{LiuY2011,Nepusz2012,Cowan2013,Nacher2013}.
Liu Yang et al.~\cite{LiuY2011}, by combining principles of network science with tools from control theory~\cite{Mesbahi2010}, 
studied controllability in gene regulatory, metabolic, social, world-wide web (WWW) and electrical circuit networks.
To identify the \emph{minimum} number $N_D$ of driver nodes required to control the network the authors proposed a maximal-matching based approach -- those nodes that 
are not matched constitute the driver nodes.
Surprisingly, they found that driver nodes tend to avoid hubs in these real-world networks. 
Gene regulatory networks displayed a high $N_D$ indicating that it is necessary to
independently control a large number of genes to fully control the network, while social and WWW networks displayed the smallest $N_D$ indicating that a few individuals could in principle control
the whole network. The former finding is useful for identifying effective drug targets (genes), while the latter is useful to design robust mechanisms to prevent (a few) individuals from bringing down
large social or web networks.

On the other hand, Nepusz and Vicsek~\cite{Nepusz2012} studied controllability from the point of view of edge dynamics, terming it as switchboard dynamics (SBD).
Strikingly different from the conclusions by Liu Yang et al.~\cite{LiuY2011}, under the SBD model, regulatory networks and communications networks were well controllable using only a few driver nodes.
However, Cowan et al.~\cite{Cowan2013} argue that a single time-dependent input is all that is needed for structural controllability, and this input should be applied to the {\sc power dominating set} of the network.
Nacher and Akutsu~\cite{Nacher2013} studied structural controllability of real-world unidirectional bipartite networks.
The authors proposed a variant of the {\sc minimum dominating set} problem to identify driver nodes, and by applying their approach to human drug-target protein networks,
they identified a set of drugs that controlled all protein targets.

While these works consider mostly time-invariant networks, recent studies~\cite{Aswani2009,Chang2012} have proposed the idea of temporal sequence of network
motifs that describe developmental events which cannot be captured by time-invariant models.
However, these works do not specifically focus on network controllability, 
but instead on generating time-variant models that fit the underlying data over time.

Here we study the controllability of \emph{time-variant} networks such as in \emph{cancer}.
From a systems point-of-view,
cancer forms a robust system capable of maintaining stable functioning (cell sustenance and proliferation) despite perturbations~\cite{Kitano2004}. 
Cancer progresses as stages over time typically with increasing aggressiveness and worsening prognosis --
\emph{e.g.} as localised cancer or \emph{in situ}, regional spread, and distant spread or metastasis.
Cancer even of a single organ can be highly diverse, and is therefore studied by categorizing into different subtypes -- 
\emph{e.g.} as basal, luminal-A, luminal-B, HER2+ and normal-like for breast cancer~\cite{Perou2000,Rakha2009}.
Identifying these stages or subtypes and the nodes (driver genes) responsible for \emph{transitions} between them is critical to detect `soft-points' that can break the robustness of cancer, and therefore
aid in developing subtype- or stage-specific anti-cancer therapies.

Differential expression analysis has been traditionally adopted to identify driver genes~\cite{Liang2003,Karanjawala2008}. 
While these analyses manage to capture several ``mountain" genes that show noticeable changes in expression, there are many more ``hills" that often do not display such drastic changes~\cite{Wood2007}. 
These hills are not identifiable through their own behaviour, but their changes are quantifiable when considered in conjunction with other genes; these hills may not be differentially
expressed but are differentially \emph{co-expressed} with other genes~\cite{Hudson2009,Srihari2013}. This is further substantiated in the following case study~\cite{Srihari2013}.

\subsection{An initial analysis}
\label{sec_initial_analysis}

\begin{figure}[!t]
\centering
\includegraphics[scale=0.57]{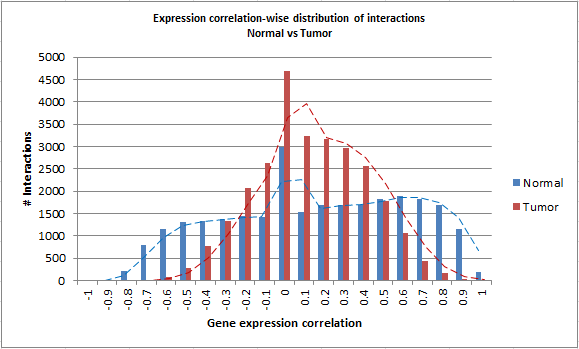}
\caption{Pancreatic normal vs tumour shows significant differences in co-expression patterns among PPIs.}
\label{Prelim_analysis}
\end{figure}

We integrated 29600 high-quality physical interactions among 5824 proteins gathered from Biogrid~\cite{Stark2011} and 39 paired normal and tumour gene-expression samples gathered from a study on
pancreatic ductal adenocarcinoma (PDAC) patients~\cite{Badea2008} to understand differences in behaviour of genes in the tumour \emph{vis-a-vis} normal
(we use the terms \emph{genes} and \emph{proteins} interchangeably). 

We computed the gene expression correlation-wise distribution of interacting gene pairs for normal and tumour conditions
(co-expression is measured as Pearson correlation across samples), as shown in Figure~\ref{Prelim_analysis}.
The gene-expression measurements, although from tissues (mixture of cells) across multiple samples, are from cells with high cellularity, and the figure depicts an `average' picture of
the co-expression pattern in the two conditions.
We observed considerable changes in the correlation of gene pairs in tumour \emph{vis-a-vis} normal -- a reduction in 8701 highly correlated interactions (of absolute correlation $\geq$0.50).
This indicated a potential loss of positively correlated ``accelerators" (interactions driving normal cellular processes) and 
negatively correlated ``brakes" (interactions suppressing tumour inducers and genome instability).
Interestingly, the analysis of ``jumps" (increase or decrease) in correlation revealed two interactions, RBPMS-RHOXF2 and SMN1-TMSB4X, displaying extreme jumps (from +/-[0.9,1] to -/+[0.9,1]).
Among these, RHOXF2, with low expression levels and no noticeable change (mean of 4.67 and 4.34, respectively),
has been implicated as a cancer promoter in pancreatic and gastric cancers~\cite{Shibata-Minoshima2012}.

\begin{figure*}[!t]
\centering
\includegraphics[scale=0.65]{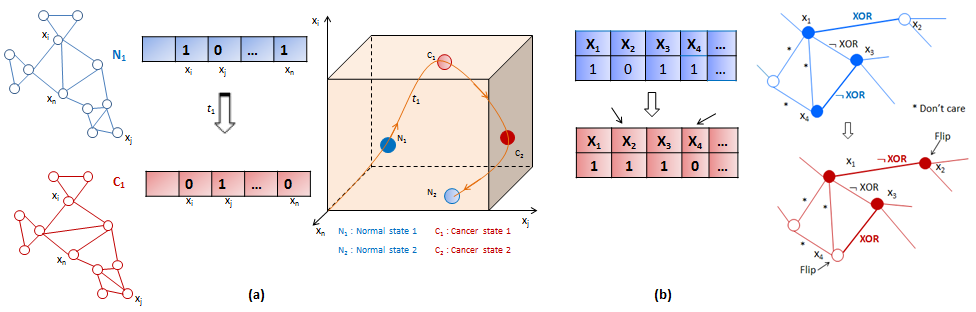}
\caption{BoolSpace: Modeling the `state space' of cancer states over time using Boolean networks.}
\label{fig1}
\end{figure*}

Taking these findings into account, here we hypothesize that changes in gene co-expression patterns, especially among physically interacting protein pairs (PPIs), are strong indicators of
transitions between tumour states. Therefore, we propose a novel model that captures the \emph{dynamics} of tumours based on co-expression patterns of PPI networks across stages,
and use this model to reconstruct the \emph{state space} for tumours.

More specifically, we model the \emph{cancer state space} as a \emph{Boolean state space} wherein
each state is identified by the configuration of a Boolean network that represents the PPI network under a given condition. Each node in the Boolean network is a Boolean variable representing a gene,
and the interactions between nodes are Boolean clauses reflecting co-expression relationships in the PPI network.
Stable states of the network are identified by Boolean satisfying ({\sc sat}) assignments to the nodes, while
transitions between the states are governed by edits to the interactions and corresponding new {\sc sat} assignments to the nodes.
Based on this model, we track the \emph{trajectory} of the Boolean network in the state space to capture progression of the tumour and the genes that drive these transitions (see Figure~\ref{fig1}a).
To identify these genes, we propose an interesting optimization problem called {\sc min flip}, and propose an efficient fixed-parameter tractable algorithm to solve it.
We demonstrate the effectiveness of our model on three case studies involving pancreatic and breast tumours and spinal-cord injury.
We call our model \textbf{BoolSpace}.


\section{Methods}

\subsection{Boolean modeling of cancer state space}

We devise a Boolean model of the cancer state space by integrating PPI network and gene expression profiles from cancer conditions as follows.
Let $H=(V,E)$ be the human PPI network, where $V$ is the set of proteins and $E$ is the set of physical interactions among the proteins.
For each gene (protein) $p \in V$ and any given condition $\Omega$, the gene-expression profile for $p$ consists of expression levels of $p$ measured across multiple samples (\emph{e.g.} patients) in the condition $\Omega$.
Using these expression profiles, for each interacting gene pair $(p,q) \in E$, we measure the co-expression $r(p,q)_{\Omega}$ in $\Omega$. 
Applying a threshold $0 < \delta < 1$ on $r$, we model the interaction $(p,q)$ as a \emph{Boolean clause}: 
\begin{itemize}\itemsep0em
\item if $p$ and $q$ are positively co-expressed, $r(p,q)_{\Omega} \geq \delta$, we model it as $p \BXOR q$ (\emph{i.e.} NOT XOR); and
\item if $p$ and $q$ are negatively co-expressed, $r(p,q)_{\Omega} \leq -\delta$, we model it as $p \XOR q$ (\emph{i.e.} XOR).
\end{itemize}
This results in a \emph{conditional Boolean network} $B_{\Omega} = (V_{\Omega}, E_{\Omega})$, where each $p \in V_{\Omega}$ is a Boolean variable and 
each interaction $(p,q) \in E_{\Omega}$ is a Boolean clause in $p$ and $q$ for condition ${\Omega}$.

When the Boolean clause for the interaction $(p,q)$ evaluates to \texttt{1}, it reflects the co-expression relationship between $p$ and $q$.
Here, $p \BXOR q$ represents the case where both $p$ and $q$ are \texttt{1} or \texttt{0} simultaneously, 
which means both $p$ and $q$ are simultaneously up-regulated or down-regulated, \emph{i.e.} positive co-expression.
On the other hand, $p \XOR q$ represents the case where only one of $p$ or $q$ is \texttt{1} (\texttt{0}) and the other \texttt{0} (\texttt{1}), which means 
only one of $p$ or $q$ is up-regulated while the other is down-regulated, \emph{i.e.} negative co-expression.

The underlying assumption here is that interacting pairs of proteins are likely to be encoded by strongly co-expressed (positive or negative) pairs of genes~\cite{Grigoriev2001,Ge2001}.
Therefore, we consider the generic PPI network as a backbone and condition (contextualize) it using expression profiles to reflect the presence or absence of interactions under different conditions.
If any two genes $p$ and $q$ display strong co-expression ($r(p,q)_{\Omega} \geq \delta$ or $\leq -\delta$) under a condition ${\Omega}$, then we consider the interaction $(p,q)$ to exist in ${\Omega}$,
with the positive or negative co-expression represented by the clauses $\BXOR$ or $\XOR$, respectively, in the Boolean network.

Given $B_{\Omega}$ generated using this model, 
we consider $(p,q)$ to be {\sc satisfied} if we can find a Boolean assignment (\texttt{0/1}) for $p$ and $q$ such that the Boolean clause for $(p,q)$ evaluates to \texttt{1}.
We consider the network $B_{\Omega}$ to be {\sc satisfied}
if we can find a Boolean assignment $\mathcal{B}(B_{\Omega}) = \{b_1,b_2,...,b_n\}$, $b_i = \texttt{0/1}$, spanning all genes $v_i \in V_{\Omega}$ such that every interaction in the network is {\sc satisfied}.
The set of all possible states ({\sc satisfied} as well as {\sc unsatisfied}) a Boolean network can take constitutes its Boolean state space,
where each state is uniquely identified by the configuration and corresponding Boolean assignments for the network.
The {\sc satisfied} states represent stable states because these reflect acceptable 
expression values for genes in the PPI network.

\subsection{Modeling transitions in Boolean space}

We postulate that the Boolean network always transists between {\sc satisfied} states in the Boolean state space.
If the configurations, $B_{\Omega}$ and $B_{\Psi}$, for a network under any two successive conditions ${\Omega}$ and ${\Psi}$ are known, 
we say $B_{\Omega}$ has \emph{transitioned} to $B_{\Psi}$ by \emph{edits} to its interactions.
These edits can be of three types \emph{viz.} loss, gain and `toggling' of interactions, all of which change the configuration of the network.
From condition ${\Omega}$ to ${\Psi}$, an interaction $(p,q)$ is:
\begin{itemize}
\item \emph{lost}, if $r(p,q)_{\Omega} \geq \delta$ or $r(p,q)_{\Omega} \leq -\delta$ but $r(p,q)_{\Psi} \in (-\delta,+\delta)$; 
\item \emph{gained}, if $ r(p,q)_{\Omega} \in (-\delta,\delta)$ but $r(p,q)_{\Psi} \geq \delta$ or $r(p,q)_{\Psi} \leq -\delta$; and
\item \emph{toggled}, if $r(p,q)_{\Omega} \geq \delta$ but $r(p,q)_{\Psi} \leq -\delta$ or \emph{vice versa}.
\end{itemize}
Upon toggling, the Boolean logic on $(p,q)$ changes from $\XOR$ to $\BXOR$ or \emph{vice versa}, and the set of toggled interactions is given by
$\mathcal{T}_{\Omega \Psi} = \{(p,q): p \circ q \in E_{\Omega}, p \bar{\circ} q \in E_{\Psi}; \circ \in \{\XOR,\BXOR\}  \}$
(recollect ``jumps" in co-expression mentioned under `Initial analysis').
The total set of interactions edited is represented as $\mathcal{E}_{\Omega \Psi}$.
These edits capture changes in co-expression patterns among interacting gene pairs, and
therefore transitions in the Boolean space reflect `rewiring' of the PPI network between conditions.
Based on this model, we are now interested in identifying the genes driving these transitions of the network.

\subsubsection{Deducing drivers of state transitions}
Given a satisfying assignment $\mathcal{B}(B_{\Omega})$,
we hypothesize that the \emph{minimum} subset of genes to be \emph{flipped} (from \texttt{0} to \texttt{1} or \emph{vice versa}) 
to maintain the network {\sc satisfied} upon transit to $B_{\Psi}$ constitutes the genes driving this transition.
To identify these driver genes, we propose the following problem:
\begin{quote}
{\sc Min Flip:} Given the network $B_{\Omega} = (V_{\Omega},E_{\Omega})$ and 
its satisfying assignment $\mathcal{B}(B_{\Omega})$ for a condition ${\Omega}$, and the set of edited interactions $\mathcal{E}_{\Omega \Psi}$ relative to another condition ${\Psi}$,
find a minimal subset of genes $V'_{\Omega} \subseteq V_{\Omega}$ to be flipped such that $B_{\Omega}$ remains {\sc satisfied} when $\mathcal{E}_{\Omega \Psi}$ is edited.
\end{quote}
Note that we \emph{edit} or \emph{toggle} interactions but \emph{flip} genes.
For example, in Figure~\ref{fig1}b, the interactions $(x_1,x_2)$ and $(x_3,x_4)$ have toggled from $\XOR$ to $\BXOR$ and $\BXOR$ to $\XOR$, respectively, and
to resatisfy this network, we flip $x_2$ and $x_4$.


\subsection{Parameterizing {\sc Min Flip}}
In the {\sc Min Flip} formulation above, we need to know the initial {\sc sat} assignment $\mathcal{B}(B_{\Omega})$ to identify the flipped genes.
In an $n$-gene network with only $\XOR$ or $\BXOR$ clauses there are polynomial (in $n$) and
in a general network there are potentially $O(2^n)$~\cite{Valiant1979} number of {\sc sat} assignments to choose as our initial assignment. 
Here we always select the assignment with the minimum number of \texttt{1}'s as our initial assignment $\mathcal{B}(B_{\Omega})$.

In a network with only $\XOR$ or $\BXOR$ clauses an assignment with the minimum number of \texttt{1}'s (called the {\sc Min-Ones-2sat} problem) is 
determinable in polynomial time, and therefore {\sc Min Flip} is solvable in polynomial time (shown later).
On the other hand, {\sc Min Flip} is equivalent (details skipped here) to the 
{\sc Min-Ones-2sat}, which is NP-complete in a general network~\cite{Valiant1979,Mishra2010}.
Therefore, to solve {\sc Min Flip} in general, we assume a bound on the flipped genes and present a tractable algorithm relative to this bound.

We present a \emph{fixed-parameter tractable} (FPT) algorithm for {\sc Min Flip}
{\em parameterizing} on the number of flipped genes. For an input of size $n$,
FPT algorithms run in $O(f(k).n^c)$ time, where $k$ is a positive integer (the parameter), $f$ a (typically exponential) function dependent only on $k$, and $c$ is a constant independent of $k$~\cite{Nied2006}.
FPT algorithms, in many cases, are more practical than the na\"{i}ve $O(n^k)$ algorithms when $k$ is ``small enough" \cite{Nied2006,Srihari2008}.
A classical example is of the {\sc vertex cover} problem, for which a number of FPT algorithms exist in the literature parameterizing primarily on the
size of the vertex cover, the best one achieving an asymptotic running time of $1.2738^k.n^{O(1)}$~\cite{Chen2006} (for an introduction to FPT algorithms, refer to \cite{Nied2006}).

We reformulate {\sc Min Flip} relative to a parameter $k > 0$ as follows:
\begin{quote}
{\sc $k$-Flip:} Given the network $B_{\Omega} = (V_{\Omega},E_{\Omega})$, 
its satisfying assignment $\mathcal{B}(B_{\Omega})$ for a condition ${\Omega}$, and the set of edited interactions $\mathcal{E}_{{\Omega \Psi}}$ relative to a condition ${\Psi}$, 
find the subset of genes $V'_{\Omega} \subseteq V_{\Omega}$, $|V'_{\Omega}| \leq k$, to be flipped such that $B_{\Omega}$ remains {\sc satisfied} when $\mathcal{E}_{{\Omega \Psi}}$ is edited.
\end{quote}
We expect $k << |V_{\Omega}|$.


\subsection{Solving {\sc Min Flip}}

We first state some preliminaries.
For a gene $p$ in network $B_{\Omega}$, $N_{\Omega}(p)$ is the set of neighbors and $E_{\Omega}(p)$ is the set of incident interactions of $p$. 
The subsets of {\sc satisfied} and {\sc unsatisfied} interactions, $E^{S}_{\Omega}(p)$ and $E^{U}_{\Omega}(p)$ respectively, form a partition of $E_{\Omega}(p)$, that is,
$E^{S}_{\Omega}(p) \cup E^{U}_{\Omega}(p) = E_{\Omega}(p)$ and $E^{S}_{\Omega}(p) \cap E^{U}_{\Omega}(p) = \emptyset$.

\begin{lemma} 
For a gene $p$, if $|E^{U}_{\Omega}(p)| > k$ then $p$ belongs to the final solution $F$ of flipped genes.
\end{lemma}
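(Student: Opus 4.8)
The plan is to argue by contradiction, resting everything on a single parity observation about the two admissible clause types. First I would record the key fact that both $\XOR$ and $\BXOR$ are functions only of the parity of their endpoints: a clause $p \XOR q$ is satisfied exactly when $p \neq q$, and a clause $p \BXOR q$ exactly when $p = q$. The consequence I will use is that flipping \emph{exactly one} endpoint of such a two-variable clause toggles its satisfaction status, while flipping both endpoints (or neither) leaves the status unchanged. This is independent of whether the clause was toggled during the edit step, because the partition $E_{\Omega}(p) = E^{S}_{\Omega}(p) \cup E^{U}_{\Omega}(p)$ is defined on the \emph{already-edited} clauses evaluated under the fixed initial assignment $\mathcal{B}(B_{\Omega})$.

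Next I would set up the contradiction. Suppose a solution $F$ with $|F| \le k$ exists and that $p \notin F$. Take any currently unsatisfied incident interaction $(p,q) \in E^{U}_{\Omega}(p)$. Since $F$ resatisfies the entire network, this clause must become satisfied, so its status must toggle. By the parity fact, toggling requires an odd number of its two endpoints to lie in $F$; as $p \notin F$ by assumption, the only remaining possibility is that the other endpoint is flipped, i.e.\ $q \in F$.

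Then I would count. Because the PPI network is a simple graph, distinct unsatisfied incident interactions of $p$ correspond to distinct neighbors $q \in N_{\Omega}(p)$, so the set of these forced neighbors has size exactly $|E^{U}_{\Omega}(p)|$ and is contained in $F$. Hence $|F| \ge |E^{U}_{\Omega}(p)| > k$, contradicting $|F| \le k$. Therefore $p$ cannot be omitted, which gives $p \in F$ and completes the argument.

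The only step needing real care is the parity observation in the first paragraph: I must state explicitly that a single flip of one endpoint \emph{always} toggles clause satisfaction, so that an unsatisfied clause can be repaired only by flipping an odd number of its endpoints, and that $E^{U}_{\Omega}(p)$ is measured on the edited clauses under the unchanged assignment. Everything after that is routine counting, and the resulting statement is precisely the analogue, in this Boolean-editing setting, of the classical high-degree reduction rule for {\sc vertex cover} mentioned earlier in the excerpt.
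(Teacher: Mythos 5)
Your proof is correct and follows essentially the same route as the paper's: assume $p \notin F$, observe that every unsatisfied interaction incident on $p$ then forces its other endpoint into $F$, and conclude $|F| \geq |E^{U}_{\Omega}(p)| > k$, a contradiction. You merely make explicit two points the paper leaves implicit --- the parity fact that a two-variable $\XOR/\BXOR$ clause changes status only when exactly one endpoint is flipped, and the distinctness of the forced neighbours --- so no further changes are needed.
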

\begin{proof}
If $p \notin F$ then, each of its neighbors $N_{\Omega}(p)$ need to be flipped at the very least to satisfy $E^{U}_{\Omega}(p)$. 
However, by doing so, we overshoot $F$ i.e., $|F| > k$.
\end{proof}


\subsubsection{An FPT algorithm for general networks}
We propose an FPT algorithm similar to that known for the {\sc vertex cover} problem~\cite{Nied2006}.
The inputs to the algorithm are the network $B_{\Omega}$ in condition ${\Omega}$, a {\sc satisfying} assignment $\mathcal{B}(B_{\Omega})$,
the edited subset $\mathcal{E}_{\Omega \Psi}$ relative to a condition ${\Psi}$, and $k > 0$.

\noindent \emph{Pre-processing:}
We perform the edits $\mathcal{E}_{\Omega \Psi}$ in $B_{\Omega}$.
At each step in our algorithm we maintain two partitions of $E_{\Omega}$:
(i) $U$ of all {\sc unsatisfied} interactions, initially $U := \mathcal{E}_{\Omega \Psi}$; and
(ii) $S$ of all {\sc satisfied} interactions, initially $S := E_{\Omega} \setminus U$.

We repeatedly find genes $p$ such that $|E^{U}_{\Omega}(p)| > k$ and do $F := F \cup \{p\}$ (by Lemma 1). 
For all interactions $(p,q)$ that get {\sc satisfied}, we do $S := S \cup \{(p,q)\}$ and $U := U \setminus \{(p,q)\}$.
At the end of this step, the resultant network should have at most $k.(k-|F|)$ {\sc unsatisfied} interactions, if it is to have a solution.
This is because for each gene $p \notin F$, $|E^{U}_{\Omega}(p)| \leq k$, and at most $k - |F|$ of these can be flipped and added to $F$, which can satisfy
at most $k.(k-|F|)$ interactions.
If $|U| > k.(k-|F|)$ we return a {\sc no}, else we set $k' := k - |F|$ and 
continue with the following recursive search.
\vspace{1.5mm}

\noindent \emph{Bounded search} (see Algorithm~\ref{Alg1}):
At every step of the recursive search we pick an interaction $(p,q)$ and branch on the following two cases: we either flip $p$ or flip $q$.
We recursively solve the problem by this two-way branching until we have flipped $k'$ genes or have found a solution.
Upon flipping $p$ (or $q$), we set $F := F \cup \{p\}$ (or $F := F \cup \{q\}$) and decrement $k'$ by 1. For all interactions $(p,x)$ (or $(q,x)$) that are incident on
$p$ (or $q$) and are {\sc satisfied} by the flip, we set $S := S \cup \{(p,x) \}$ and $U := U \setminus \{(p,x) \}$ (or $S := S \cup \{(q,x) \}$ and $U := U \setminus \{(q,x) \}$).
At any step if $k'=0$ and $U \neq \emptyset$, we return a {\sc no}, else we return an {\sc yes} along with $F$.

Since we perform a two-way branching at every recursive step and upto a depth of at most $k'$,
the total number of nodes in the search tree is at most $2^{k'}$, and
because we spend at most a polynomial time (in $|E_{\Omega}|$) at each of these nodes, total the running time is bounded in the worst case by $O^*(2^k)$,
\emph{i.e.} FPT.

\emph{Lazy speed-up:}
We can speed-up the above algorithm in certain cases (\emph{e.g.} when the Boolean clauses are of the form $p \wedge q$) by making the following observation:
if $(p,q)$ remains {\sc unsatisfied} upon flipping $p$, then the only way to satisfy $(p,q)$ is to flip $q$ as well, and 
therefore we can perform the operations of two recursive calls within one call based on the satisfiability of $(p,q)$.
Consequently, in any step after flipping $p$, if $(p,q)$ remains {\sc unsatisfied}, then
instead of performing a call immediately, we delay the call to post flipping of $q$. 
We then decrement $k'$ by 2, and therefore speed-up the descent down the tree and also avoid the overhead of a function call.



\begin{algorithm}
\caption{\texttt{bool} $k$-Flip$(U, S , F,k)$}
\begin{algorithmic}
\STATE \small 
\STATE \texttt{bool} $r$;
\IF{$k=0$ and $U \neq \emptyset$}
\STATE return~{\sc FALSE};
\ENDIF
\STATE
\STATE Pick $(p,q)$;			\emph{// Pick a random interaction.}
\STATE ---------------
\STATE Flip $p$; $F := F \cup p$;				
\IF{$(p,q)$ is {\sc satisfied} }
\STATE $U := U \setminus (p,q)$, $S := S \cup (p,q)$;

\IF{$U \neq \emptyset$ and $k > 0$} 
\STATE \emph{//Decrement k by 1 and recurse.}
\STATE $r$ := $k$-Flip$(U, S, F, k-1)$;		
\ENDIF
\IF{$r ==$ TRUE}
\STATE return TRUE and $F$;
\ENDIF

\ENDIF
\STATE ---------------
\STATE Flip $q$; $F := F \cup q$
\STATE $U := U \setminus (p,q)$, $S := S \cup (p,q)$;

\IF{$U \neq \emptyset$ and $k > 1$}
\STATE \emph{//Decrement k by 2 and recurse.}
\STATE $r$ := $k$-Flip$(U, S, F, k-2)$;	
\ENDIF
\IF{$r =$ TRUE}
\STATE return TRUE and $F$;
\ELSE
\STATE return FALSE;
\ENDIF
\STATE
\STATE \textbf{end} $k$-Flip;
\end{algorithmic}
\label{Alg1}
\end{algorithm}


\subsubsection{Initial assignment for general networks}
The problem of determining an initial assignment with the minimum number of \texttt{1}'s, called the {\sc Min-Ones 2-Sat} problem, is NP-complete in a general network~\cite{Valiant1979,Mishra2010}. 
Therefore, to identify the initial assignment, we parameterize the problem as follows:
\begin{quote}
{\sc $k$-Ones 2-Sat}:
Given a Boolean network $B_{\Omega}$ and a parameter $k>0$, find a {\sc sat} assignment $\mathcal{B}(B_{\Omega})$ such that $\mathcal{B}(B_{\Omega})$ has at most $k$ \texttt{1}'s.
\end{quote}
Observe here that {\sc $k$-Ones 2-Sat} is equivalent to {\sc $k$-Flip} by starting with an all-\texttt{0} assignment. Therefore,
to find the solution $\mathcal{B}(B_{\Omega})$, we just reset every gene to \texttt{0} and run Algorithm~\ref{Alg1} with the parameter as $k$. 
The number of \texttt{0}'s flipped (at most $k$) is the solution to {\sc $k$-Ones 2-Sat}, determinable in $O(2^k)$ time, giving us the initial assignment $\mathcal{B}(B_{\Omega})$.


\subsubsection{A polynomial-time algorithm for $\XOR/\BXOR$-networks}
We first show that in a network with only $\XOR/\BXOR$ clauses, there are only a polynomial number of satisfiability assignments.
\begin{theorem}
The number of satisfiability assignments for a Boolean network $B$ containing only $\XOR/\BXOR$ clauses is twice the number of
components of $B$.
\end{theorem}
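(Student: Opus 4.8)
The plan is to reach the stated count --- twice the number of components --- by first analyzing a single connected component of $B$ in isolation and then aggregating over all components. The key observation is that an $\XOR/\BXOR$ clause is nothing but an \emph{equality-type constraint}: the clause $p \BXOR q$ is satisfied exactly when $p = q$, while $p \XOR q$ is satisfied exactly when $p \neq q$. Hence every interaction, once one of its endpoints is fixed, forces the value of the other endpoint, and the whole counting problem reduces to tracking how these forced values propagate.

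First I would restrict attention to a single connected component $C$ and show that a satisfying assignment on $C$ is completely determined by the value of any one chosen ``anchor'' variable $v_0$. Fixing $v_0 \in \{\texttt{0},\texttt{1}\}$, I would propagate values outward along a spanning tree of $C$: crossing a $\BXOR$ edge copies the current value, crossing a $\XOR$ edge complements it, so by connectivity every variable of $C$ receives a uniquely forced value. Since there are exactly two choices for $v_0$ and each one determines all of $C$, a consistent component admits exactly two satisfying assignments, and these two are Boolean complements of each other.

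A subtlety to handle is \emph{consistency}: a non-tree edge could force a value that conflicts with the one already propagated through the spanning tree, which happens precisely when some cycle of $C$ contains an odd number of $\XOR$ edges. I would argue that, because the theorem concerns networks that are {\sc satisfied}, no such conflict can occur, so each component genuinely contributes its two complementary assignments rather than zero.

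The main obstacle is the aggregation step that produces the claimed total. Having established that each component carries exactly two satisfying assignments, I would assemble these per-component contributions into the count of satisfying assignments of the entire network $B$ and verify that the bookkeeping yields twice the number of components. Pinning down exactly how the per-component choices combine into the global tally --- rather than mis-counting the way they interact --- is where the argument must be most careful, and I expect this combination to be the crux of the proof.
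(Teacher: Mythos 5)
Your per-component analysis is correct and captures the real content of the theorem: reading $p \BXOR q$ as an equality constraint and $p \XOR q$ as a disequality constraint, anchoring one variable and propagating along a spanning tree shows that each satisfiable component admits exactly two satisfying assignments, complementary to one another, with consistency governed by the parity of $\XOR$-edges around cycles (the paper defers inconsistency to its separate treatment of ``contradictory cycles,'' so your standing assumption of satisfiability matches its implicit one). Your route is mildly different from the paper's, which first 2-colours the components of the $\XOR$-only subnetwork and then absorbs the $\BXOR$-interactions one at a time, propagating equality or opening a new component; your uniform spanning-tree propagation is the cleaner argument, makes the consistency condition explicit where the paper only says ``if $B'$ is satisfiable,'' and avoids the paper's silent omission of the case where an added interaction merges two previously separate components.

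The genuine gap is the aggregation step you yourself flag as the crux---and your suspicion is justified, because no bookkeeping can make it come out to $2c$. Choices in distinct components are independent, so the number of global satisfying assignments is the \emph{product} of the per-component counts, $2^{c}$ for $c$ components, not the sum $2c$; these agree only for $c \le 2$. Concretely, three disjoint interactions $p_{1} \BXOR q_{1}$, $p_{2} \BXOR q_{2}$, $p_{3} \BXOR q_{3}$ admit $2^{3}=8$ satisfying assignments, while the theorem predicts $6$. So the statement as literally written is false for $c \ge 3$, and the paper's own proof commits exactly the leap you were wary of, passing from ``two ways per component'' to ``total is twice the number of components.'' What is true---and what the paper's subsequent polynomial-time algorithm actually relies on---is the per-component dichotomy your propagation lemma establishes: satisfying assignments are parameterized by one binary choice per component, so there are only $2c$ candidate component restrictions to examine, and a minimum-ones assignment is found by taking, in each component, whichever of its two assignments has fewer \texttt{1}'s. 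To finish, you should either prove the corrected global count $2^{c}$ or restate the theorem as this dichotomy; your spanning-tree argument already does all the work for either version.
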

\begin{proof}
We construct a subnetwork $B'$ using only the $\XOR$-interactions of $B$. If $B'$ is satisfiable, then we should be able to 2-colour each of its components, that is,
assign a \texttt{1/0} to each gene such that no two genes have the same assignment. This is equivalent to finding whether $B'$ is bipartite, and can be done in two ways for each of the components.
Next, we pick each remaining $\BXOR$-interaction and add it to $B'$. If an interaction $(p,q)$ is incident on a gene $p$ already present in $B'$, then
$q$ should have the same assignment as $p$, else this interaction belongs to a new component and there are two ways of satisfying it. Therefore, the total number of ways of satisfying $B$
is twice the number of components in $B$.
\end{proof}

We next give a polynomial-time algorithm for {\sc Min Flip} in $\XOR/\BXOR$-networks.
For a given such network $B_{\Omega}$, there are only a polynomial number of {\sc sat} assignments (Theorem 1), and therefore we can identify the initial 
{\sc sat} assignment $\mathcal{B}(B_{\Omega})$ with the minimum number of \texttt{1}'s by simply checking each of these assignments, in polynomial time.

Observe that among the interactions in $\mathcal{E}_{\Omega\Psi}$, the lost interactions do not change the satisfiability of the network, while for the gained or toggled interactions $(p,q) \in \mathcal{E}_{\Omega\Psi}$ 
we need to flip only one of $p$ or $q$ to resatisfy $(p,q)$.
Therefore, there are at most $2.|\mathcal{E}_{\Omega\Psi}|$ ways to resatisfy the network upon editing $\mathcal{E}_{\Omega\Psi}$, and
we can identify the assignment achievable using the minimum number of flips in polynomial time.

\subsection{Practical considerations}

\subsubsection{Network structure}
The network structure might not always allow a satisfying assignment. Therefore, in practice,
we allow at most a certain (small) number of interactions to be left {\sc unsatisfied} in our solution. 
This number is specified as an input to our algorithm (here, 10\% of the total interactions).

\subsubsection{Contradictory cycles}
Cycles in the network that cause contradictory assignments can interfere with our search for solutions. 
Consider a cycle $C = \{p,q,...,r,p\}$ in an $\XOR/\BXOR$-network. Starting at $p$ and assigning it a \texttt{0(1)}, if we go around the cycle and arrive at a contradictory
assignment \texttt{1(0)} for $p$, we call $C$ a contradictory cycle. We overcome such cycles in the network by arbitrarily marking an interaction in each of the cycles to be
left {\sc unsatisfied} in the network.


\section{Results}

We implemented BoolSpace using C/C++ on an Intel Core i5 Linux machine. The source codes are available at: \url{http://www.bioinformatics.org.au/tools-data}.
Although the networks considered here contain only $\XOR/\BXOR$-interactions, we employed the algorithm for general networks in our experiments.

\subsection{Preparation of experimental data}
We applied BoolSpace on three case studies:
(i) pancreatic normal and tumour conditions in human;
(ii) BRCA1 and BRCA2 breast tumours in human; and
(iii) across five time-points after spinal-cord injury (SCI) in rats. 
While the third case study is not from cancer, much of the regeneration mechanisms post-injury involve progressive stages similar to cancer.
We gathered the following datasets for our experiments.

\emph{PPI datasets:}
We gathered \emph{Homo sapiens}, \emph{Mus musculus} and \emph{Rattus norvegicus} PPI data inferred from multiple low- and high-throughput experiments deposited in Biogrid v3.1.93~\cite{Stark2011}.
To minimize false-positives in these datasets~\cite{Srihari2013b} we used a scoring scheme, 
Iterative-CD (with 30 iterations) by Liu Guimei et al.~\cite{Liu2009}, to assign a reliability score for each interaction in the PPI networks.
The score (between 0 and 1) reflects the reliability of interactions by accounting for the number of common neighbors shared among the proteins in each pair. 
Discarding low-scoring interactions ($<$0.20) resulted in a high-quality human PPI network of 29600 interactions among 5824 proteins (average node degree $d_{avg}$ = 10.16), and 
a mammalian (rat and mouse) PPI network of 3215 interactions among 1146 proteins ($d_{avg}$ = 5.61).

\emph{Gene expression datasets:}
The pancreatic ductal adenocarcinoma (PDAC) gene-expression datasets  were gathered from the studies by Badea \emph{et al.}~\cite{Badea2008}, containing of 39 matched pairs (78 total) of normal and tumour samples
(GEO GSE15471).
The breast expression profiles came from the study on familial BRCA1 and BRCA2 tumours by Waddell \emph{et al.}~\cite{Waddell2010},
containing 19 BRCA1- and 30 BRCA2-tumour samples (GEO GSE19177).
The rat spinal-cord injury (SCI) datasets came from the study by De Baise \emph{et al.}~\cite{DeBaise2005}, containing samples from five time-points post SCI: 0 hours, 4 hours,
72 hours, 7 days and 28 days with at least 15 samples per time-point (ArrayExpress E-GEOD-5296).
In all cases, the original processed (normalized) datasets released by the studies were used.

\emph{Some background on these case studies:}
PDAC accounts for most ($95$\%) pancreatic tumours and is predominantly characterized by dysfunctioning (by mutation) of the KRAS oncogene and of the CDKN2A, SMAD4 and TP53 tumour-suppressor genes~\cite{Jones2008}.

On the other hand, breast tumours are very heterogeneous, and extensive gene expression profiling studies have classified sporadic tumours into clinically relevant molecular subtypes \emph{viz.}
luminal A, luminal B, triple-negative/basal-like, HER2+ and normal-like~\cite{Perou2000,Rakha2009}.
Most breast tumours are luminal and they tend to be estrogen-receptor positive
(ER$+$) and/or progesterone-receptor positive (PR$+$). Luminal tumours have relatively better prognosis and survival rates. 
Triple-negative tumours are characterised by lack of ER (ER$-$), PR (PR$-$) and HER2 (HER2$-$) expression. These tumours are highly aggressive relative to the luminal subtypes and are associated with high recurrence,
distant metastasis and poor survival. Basal-like tumours form a subtype of triple-negative tumours that stain positive for EGFR/HER1 and express high-molecular-weight form of cytokeratine 5/6~\cite{Rakha2009}.
The breast expression profiles we employ here come from the study on familial BRCA1 and BRCA2 tumours (that have germline BRCA1/BRCA2 mutations) by Waddell \emph{et al.}~\cite{Waddell2010}.
BRCA1 tumours are known to be predominantly triple-negative/basal-like while BRCA2 tumours predominantly luminal~\cite{Lakhani1998}.

SCI causes secondary biochemical changes which are typically associated with 
hemorrhage, metabolic failure, inflammatory/immune activation, loss of ionic homeostasis, lipid degradation, production of free radicals, and neurotransmitter/neuromodulator imbalances~\cite{DeBaise2005,Giovanni2003}.
Such alterations contribute to death of neurons and oligodendroglial cells, glial proliferation, demyelination, and axonal loss~\cite{DeBaise2005}.

\subsection{Setting the parameter $k$}
The parameter $k$ determines the size of the allowable set of genes to be flipped. 
While there is no standard procedure to choose $k$, we would like a $k$ that is as close as possible to the minimum number of flipped genes (the minimum is unknown to us). 
To determine such a $k$, we provide a rule-of-thumb to be used \emph{in practice}. 
This rule is based on the observation that typically when $k$ is much farther from the minimum, the FPT algorithm tends to takes lesser time,
compared to when $k$ is closer to the minimum. This is because the search is depth-first in nature and therefore, 
with a larger $k$ it is easier to find a deep path containing a solution quickly 
(by including the first-available $k$ genes into the solution) instead of exploring the rest of the search tree and trying for a smaller solution.
Although this ``quick" solution is of size at most $k$ and is correct, we would like to force the algorithm to explore other (potentially smaller) solutions, if achievable.
Therefore, our rule-of-thumb works as follows: we start with $k = |V_{\Omega}| - 1$, and repeatedly decrement $k$ until we can find a solution at each iteration within ``reasonable" time $T$
(here, we set $T=100$ seconds). If a solution is found within $T$ time, we consider the algorithm is not exploring the search tree sufficiently, 
and therefore we continue decrementing $k$. We stop at the $k$ at which the search takes more than $T$ time.


\subsection{Analysis of network in different conditions}

\begin{table*}[!t]
\caption{Transition of Boolean networks between conditions in three case studies}
\label{Table1}
\centering
\begin{tabular}{| l || l | | c || c | c  c  c || c  c  c || c || c || c || c |}
\hline
{}		&{}			&{}		& {}		& \multicolumn{3}{c||}{\#Interactions}				& \multicolumn{3}{c||}{\#Edits}								& {Parameter}		& {\#Genes}		&{Running} 		\\
{Case study}	&{Transition}		&{$\delta$}	& {\#Genes}	& {Total}	& {$\XOR$}	&{$\BXOR$}			& {Lost}	& {Gained}	&{Toggled}						& {$k$}			&{flipped}		&{time (sec)$^*$}	\\
\hline

		&			&		&		&		&		&				&		&		&							&			&			&			\\
		&			&0.80		&	1174	&	1701	&	241	&	1460			&	1672	&	16	&	0						&	10		&	9		& 6			\\

Pancreatic	&Normal {\em to}	&0.75		&	1712	&	2896	&	573	&	2323			&	2836	&	40	&	4						&	25		&	23		& 10			\\

		&tumour			&0.70		&	2265	&	4300	&	1056	&	3244			&	4185	&	95	&	4						&	60		&	54		& 13			\\
\hline
		&			&		&		&		&		&				&		&		&							&			&			&			\\

		&			&0.80		&	270	&	302	&	106	&	196			&	293	&	23	&	0						&	5		&	1		& 8			\\

Breast		&BRCA1 {\em to}		&0.75		&	604	&	646	&	227	&	419			&	620	&	45	&	2						&	15		&	11		& 10			\\

		&BRCA2			&0.70		&	1090	&	1170	&	373	&	797			&	1116	&	95	&	4						&	50		&	46		& 10			\\
\hline
		&			&		&		&		&		&				&		&		&							&			&			&			\\

		&			&0.80		&	25	&	15	&	0	&	15			&	4	&	15	&	0						&	5		&	0		& 1			\\

Spinal		&0hr {\em to} 4hr 	&0.75		&	35	&	22	&	0	&	22			&	9	&	28	&	0						&	5		&	1		& 1			\\

		&			&0.70		&	42	&	26	&	0	&	26			&	9	&	45	&	0						&	20		&	14		& 1			\\

\cdashline{2-13}
		&			&		&		&		&		&				&		&		&							&			&			&			\\
		&			&0.80		&	108	&	87	&	3	&	73			&	15	&	76	&	0						&	5		&	3		& 1			\\

cord		&4hr {\em to} 72hr 	&0.75		&	66	&	41	&	4	&	37			&	24	&	93	&	0						&	5		&	4		& 1			\\

		&			&0.70		&	99	&	62	&	6	&	56			&	38	&	130	&	1						&	25		&	23		& 1			\\

\cdashline{2-13}
		&			&		&		&		&		&				&		&		&							&			&			&			\\
		&			&0.80		&	107	&	87	&	3	&	84			&	39	&	39	&	0						&	5		&	1		& 1			\\

injury		&72hr {\em to} 7d 	&0.75		&	136	&	112	&	4	&	108			&	49	&	46	&	0						&	5		&	2		& 1			\\

		&			&0.70		&	185	&	154	&	8	&	146			&	75	&	46	&	0						&	10		&	6		& 1			\\

\cdashline{2-13}
		&			&		&		&		&		&				&		&		&							&			&			&			\\
		&			&0.80		&	108	&	87	&	1	&	86			&	42	&	22	&	0						&	5		&	4		& 1			\\

		&7d {\em to} 28d 	&0.75		&	131	&	109	&	2	&	107			&	45	&	33	&	0						&	10		&	6		& 1			\\

		&			&0.70		&	153	&	126	&	5	&	121			&	53	&	49	&	0						&	25		&	22		& 1			\\

\cline{2-13}
		&			&		&		&		&		&				&		&		&							&			&			&			\\
		&			&0.80		&	25	&	15	&	0	&	15			&	4	&	56	&	0						&	5		&	5		& 1			\\

		&0hr {\em to} 28d 	&0.75		&	35	&	22	&	0	&	22			&	8	&	83	&	0						&	10		&	7		& 1			\\

		&			&0.70		&	42	&	26	&	0	&	26			&	11	&	107	&	0						&	20		&	16		& 1			\\

\hline
\multicolumn{13}{l}{\scriptsize *Includes the time for finding initial Boolean assignment and the solution after edits.}
\end{tabular}
\end{table*}


Table~\ref{Table1} shows properties of the Boolean network and the number of genes flipped while it transists between different conditions for
$\delta = \{0.80,0.75,0.70\}$ in the three case studies -- pancreatic and breast tumours and spinal-cord injury.
The number of $\BXOR$ interactions are higher than $\XOR$ in these networks indicating higher number of
positively co-expressed interacting pairs compared to negatively co-expressed; this is not surprising since we expect higher number of ``accelerator" interactions compared to ``brakes", 
and has been observed in several previous studies as well~\cite{ChuLH2008}. 
As the $\delta$-threshold decreases, we observe an increase in the network sizes because we allow for lowly co-expressed gene pairs.
This also leads to higher number of edits in terms of lost, gained and toggled interactions between the conditions. 

The correlation-wise distributions for interactions before and after the edits showed significant differences:
KS test -- Normal vs PDAC $D_{NP} = 23.12 > K_{\alpha = 0.05}$; BRCA1 vs BRCA2 $D_{B12} = 22.85 > K_{\alpha = 0.05}$; and SCI
between 7hr and 7d $D_{7hr-7d} = 17.03 > K_{\alpha = 0.05}$, where  $K_{\alpha = 0.05} = 1.36$.

While it is not entirely surprising to see (given our initial analysis in Section I) a large number of edited (particularly lost) interactions between normal and tumour (here, normal and PDAC), the noticeably large number of
interactions edited between two subtypes of the same cancer (here, BRCA1 and BRCA2 tumours) is very interesting. This strongly suggests considerable differences in PPI wiring between the two breast tumours.
In general, BRCA1 tumours have higher number of interactions compared to BRCA2 tumours. Whether this is reflective of the higher aggressiveness of BRCA1 tumours~\cite{Lakhani1998} is interesting to explore.

Further, while there were higher number of total edited interactions from normal to tumour compared to BRCA1 tumour to BRCA2 tumour, the gained interactions from BRCA1 tumour to BRCA2 tumour were higher than the
gained interactions from normal to tumour. Even though the two cancers (pancreatic and breast) are not directly comparable, but this trend indicates that during transition from normal to tumour,
we predominantly see a weakening of the cellular machinery (as loss in interactions), but between subtypes, we can expect considerable rewiring involving not only a loss
but also gain of interactions. This extensive rewiring might be the cause of considerable differences between the two tumour subtypes.

In the case of SCI, the number of gained interactions between 0hr to 72hr is higher than the lost, but between 72hr to 28d the number of lost interactions is higher than gained. Whether this is indicative of a 
pattern of response to the injury is worth further exploration -- for example,
a considerable number of new interactions are formed during the initial stages to aid recovery, and subsequently lost when the recovery stabilizes during the final stages.

\subsubsection{Functional analysis of edited interactions}

DAVID-based (\url{http://david.abcc.ncifcrf.gov/})~\cite{Dennis2003} functional analysis of the edited interactions in pancreatic and breast showed significant enrichment ($p<0.01$) for
Biological Process (BP) terms \emph{viz.} Cell cycle, Chromatin organization,
DNA repair and RNA splicing, indicating considerable rewiring in core cellular processes responsible for genome stability and maintenance. For example, interactions involving the tumour suppressors
TP53 and SMAD4 in pancreatic tumour, and those involved in DNA double-strand break repair namely BRE and BRCC3 apart from BRCA1, BRCA2 and TP53 in breast tumours showed significant decrease in correlations
indicating loss of interactions.
Among the interactions edited in spinal-cord injury, we noted significant enrichment ($p<0.01$) for MAPK signalling, TGF-$\beta$ signalling, Inflammatory response, Cell proliferation and Apoptosis pathways.
This indicated activation of regenerative mechanisms including response to inflammation and growth-factor pathway actuation for regeneration of cells.


\subsection{Analysis of driver genes}

Next, we collated the flipped genes (Tables~\ref{Table2a} and \ref{Table2b}) and studied them using differential expression and functional analysis.

\subsubsection{Differential expression of flipped genes}

\begin{figure}[!t]
\centering
\includegraphics[scale=0.72]{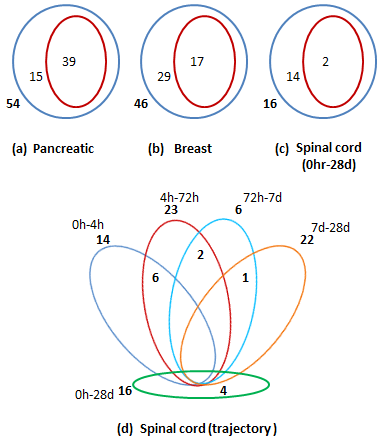}
\caption{Analysis of flip genes: (a)-(c): Differentially expressed genes (red) among the flipped genes (blue); (d) Genes flipped at each stage of SCI.}
\label{fig4}
\end{figure}

We assessed our flipped genes using differential expression analysis ($p$-value $<0.001$), as shown in Figure~\ref{fig4} (a)-(c).
Interestingly, while many of the flipped genes were also differentially expressed, there were several others which were not captured by the analysis.
Investigation into these genes showed that these directly or indirectly (through one or two neighbors) interacted in the PPI network with 
key genes implicated in pancreatic and breast tumours. In other words, these were differentially \emph{co-expressed} and belonged to
the same pathways as the key genes.

\subsubsection{Functional analysis of flipped genes}
Table~\ref{Table3} shows the top GO terms (using DAVID~\cite{Dennis2003}) enriched for the flipped genes in the three case studies. 
For the spinal-cord study, we show the enrichment only for genes flipped between the two extreme conditions (0hr to 28days).

The pancreatic genes were involved in Cell cycle, Wnt signalling and Mismatch repair pathways, which have been implicated in pancreatic tumours~\cite{Jones2008}.
The high enrichment for Neurotrophin signalling further the nexus between neural genes and pancreatic carcinogenesis~\cite{Biankin2012, Srihari2013}.
The breast genes were enriched for Homologous recombination, which is a key pathway in DNA double-strand break repair and houses the two breast-cancer susceptibility genes, BRCA1 and BRCA2.
The SCI genes were enriched for Immune response and Growth-factor signalling pathways indicating activation of regenerative mechanisms.

\begin{table*}[!t]
\caption{Genes flipped between tumour states in human}
\label{Table2a}
\centering
\scriptsize
\begin{tabular}{| c || l l l || l l l |}
\hline
{Transition}	& \multicolumn{3}{|c||}{Normal {\em to} PDAC}		& \multicolumn{3}{|c|}{BRCA1 {\em to} BRCA2}	\\
\hline		
	 	& Brca1		& Jun		& Ruvbl1		& Brca1		& Ppar$\gamma$	& Sp1		\\
Flipped		& Csnk2b	& Krt15		& Sfn			& Esr1		& Tp53		& Hsf1		\\
genes		& Fgfr		& Mcm5		& Usp10			& Cebp$\beta$	& Myb		&		\\
		& Fos		& Psmd1		&			& Gata1		& Foxa1		&		\\
		& Hras		& Rbx1		&			& Gata3		& Fos		&		\\
\hline
\multicolumn{7}{l}{\scriptsize Genes shown here have degree $\geq 5$}\\
\end{tabular}
\end{table*}

\begin{table*}[!t]
\caption{Genes flipped at different stage-transitions post spinal-cord injury in rats}
\label{Table2b}
\centering
\scriptsize
\begin{tabular}{| c || l l | l l l | l | l l l || l l l |}
\hline
{Transition}	&\multicolumn{2}{|c|}{0hr {\em to} 4hr}	&	\multicolumn{3}{c|}{4hr {\em to} 72hr}			&	{72hr {\em to} 7d}		&	\multicolumn{3}{c ||}{7d {\em to} 28d}		&	\multicolumn{3}{ c |}{0hr {\em to} 28d}					\\
\hline
	&Angpt2 &	Tnfrsf1b		&Pten			&Tnfrsf1b	& Smad4				&	Ppar$\gamma$c1a			 &Pten				&Csk		&	Atm	&	Hdac1		  &	Csk	  &	Bcl10		  \\
	&Sparc  &	Mapk1			&Angpt2			&Akt1		& Fabp5				&	Sp1				 &Hdac1				&Ccng1		&	Mapk3	&	Ppar$\gamma$c1a	  &	Ccng1	  &	Nf$\kappa$bia		 \\
Flipped	&Cdc14  &	Jak2			&Cflar			&Bmpr1a		& Neurod1			&	Akt1				 &Cflar				&Ppp1ca		&	Casp9 	&	Ccnd3		  &	Ppp1ca	  &	Chek2		 \\
genes	&Il1r1  &	Relb 			&Hoxa3			&Csk		& Atm				&	Csk				 &Sp1				&Smad1		&	Bcl10	&	Cdk4		  &	Egfr	  &	Casp9 		 \\
	&Bmp4	&	Tlr2			&Cd14			&Pms2		& Tgfbr1			&	Eif4g2				 &Ccnd3				&Smad4		&	Nf$\kappa$bia	&	akt1	  &	Mapk3	  &	Cdkn1a		 \\
	&Myd88  &	Nf$\kappa$bia		&Il1r1			&Ppp1ca		& Trib3				&	Zeb1				 &Cdk4				&Egfr		&	Chek2	&	Traf2		  &		  &			 \\
	&Wnt4   &	Bcl3			&Myd88			&Eif4e		& Tlr2				&					 &Akt1				&Hif1a		&	Cdkn1a	&			  &		  &			 \\
	&	&				&Hfe			&Smad1  	&				&					 &Traf2				 &		 &	       &			  &		   &			  \\
	
\hline
\end{tabular}
\end{table*}

\begin{table*}[!t]
\caption{Enrichment for top Gene Ontology terms in flipped genes}
\label{Table3}
\centering
\scriptsize
\begin{tabular}{| l || l | c | c || l | c | c || l | c | c |}
\hline
{}		& \multicolumn{9}{c|}{Case study}																																			\\
{GO}		&	\multicolumn{3}{c||}{Pancreatic}						&	\multicolumn{3}{c||}{Breast}			&		\multicolumn{3}{c|}{Spinal cord injury}															\\
\hline
		&	{Term}				& Genes			&	$p$-value		&	{Term}					& Genes			&	$p$-value		&	{Term}						& Genes			&	$p$-value		\\		
		&					& (\%)			&				&						& (\%)			&				&							& (\%)			&				\\
\hline
		&	Cell cycle			&	4.6		&	3.5(-13)	&	Cell cycle					&	3.2		&	2.7(-07)		&	Apoptosis					&	21.7		&	1.3(-04)		\\
		&	Neurotrophin signal.		&	3.0		&	1.7(-05)	&	Nucleotide excision rep.			&	1.6		&	1.5(-05)		&	TGF-$\beta$ sig.				&	17.4		&	2.3(-03)		\\
		&	Nucleotide excision rep.	&	1.7		&	1.9(-05)	&	DNA repli.					&	1.4		&	6.4(-05)		&	Toll-like receptor				&	17.4		&	3.4(-03)		\\
KEGG		&	Pancreatic cancer		&	2.1		&	5.7(-05)	&	Adipocytokine signal.				&	1.8		&	7.5(-07)		&	Pancreatic cancer				&	13.0		&	2.1(-02)		\\
pathways	&	Adipocytokine signal.		&	2.0		&	9.7(-04)	&	Apoptosis					&	2.1		&	1.2(-04)		&	colourectal cancer				&	13.0		&	2.9(-02)		\\
		&	Regulation of autophagy		&	1.3		&	3.4(-04)	&	Homologous recomb.				&	1.0		&	1.6(-03)		&	MAPK signal.					&	17.4		&	4.8(-02)		\\
		&	Mismatch rep.			&	1.0		&	5.2(-04)	&	Insulin signal.					&	2.2		&	6.0(-03)		&							&			&				\\
		&	Wnt signal.			&	2.8		&	2.2(-03)	&	Mismatch rep.					&	0.9		&	2.8(-03)		&							&			&				\\
		
\hline
		&	Cell cycle			&	17.3		&	1.6(-35)	&	Chromosome org.					&	14.3		&	1.5(-43)		&	Enzyme-receptor signal.				&	34.8		&	1.6(-07)		\\	
Biological	&	Chromosome org			&	13.0		&	6.2(-33)	&	Chromatin mod.					&	12.2		&	1.3(-40)		&	Serine/threonine kinase				&	21.7		&	6.8(-06)		\\
Process		&	Chromatin mod.			&	8.9		&	1.0(-27)	&	Transcription reg.				&	31.6		&	1.1(-24)		&	Inflammatory res.				&	26.1		&	2.5(-05)		\\
		&					&			&			&							&			&				&	Defense/immune res.				&	30.4		&	7.8(-05)		\\
		&					&			&			&							&			&				&	Cell proliferation 				&	30.4		&	1.6(-04)		\\
\hline
\end{tabular}
\end{table*}

Table~\ref{Table2b} and Figure~\ref{fig4}d show overlaps among the flipped genes at each transition post SCI from 0hr till 28d.
For example, 14 genes were flipped from 0hr to 4hr and 23 genes were flipped from 4hr to 72hr stages with 6 genes in common.
Interestingly, the overlaps between successive stages were not considerable ($<50\%$) indicating that 
sets of genes involved in different cellular processes were flipped at each transition. For example,
the genes flipped during the initial stages (0hr to 4hr) were predominantly enriched for immune response and the proteins were localized in extra-cellular matrix and membranes, 
while those during the final stages (7d to 28d) were predominantly enriched for cell apoptosis, growth and proliferation, and were localized in the nucleus (Figure~\ref{fig6}).
This suggests a pattern to SCI response --
activation of immune response during the initial stages, and regeneration through cell apoptosis, growth and proliferation during the final stages.
Further, the analysis also highlights that genes belonging to cell cycle progression are involved in neuronal responses to DNA damage and/or cell stress after SCI, as also observed in 
earlier studies~\cite{Giovanni2003}. For example,
Pten (O08586) is a tumour suppressor which modulates cell cycle progression and cell survival, and is involved in
controlling the rate of newborn neuron-integration during adult neurogenesis, including correct neuron positioning, dendritic development and synapse formation.

\begin{figure*}[!t]
\centering
\includegraphics[scale=0.60]{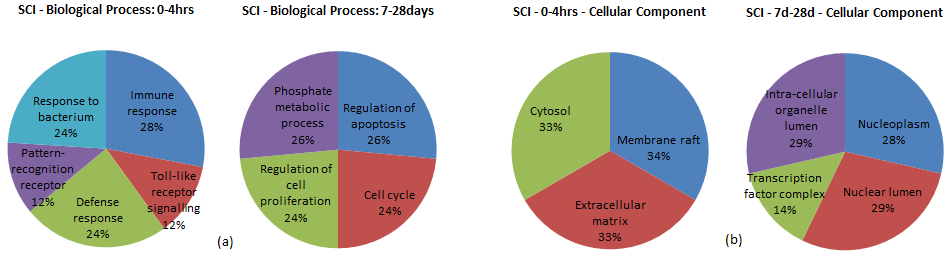}
\caption{Distribution of flipped genes in SCI at 0-4hrs and 7-28days for (a) Biological Process and (b) Cellular Component terms.}
\label{fig6}
\end{figure*}


\subsubsection{In-depth study of some flipped genes}

Several of the flipped genes were cyclin-dependent kinases (CDKs), particularly 
the serine-threonine kinases that act as ``ON/OFF" switches and play crucial roles in the regulation of cell proliferation, apoptosis and cell differentiation;
the flipping of genes in our Boolean model might possibly be related to these cellular switching mechanisms.
For example, we noticed flipping of Ccnd3 (P30282), a member of the G1/S-specific cyclin D3-CDK4 complex that phosphorylates and inhibits members of the retinoblastoma (RB) 
protein family including RB1 and regulates the cell-cycle during G1/S transition.
It also acts as a substrate for SMAD3 (a tumour suppressor), phosphorylating SMAD3 in a cell-cycle-dependent manner and repressing its transcriptional activity
(\url{http://www.uniprot.org/uniprot/P30282}~\cite{UniProt2013}).

Among the flipped genes were also a few transcription factors (TFs).
For example, the following TFs flipped between BRCA1 and BRCA2 tumours: GATA3, ESR1, FOXA1 and XBP1.
These four TFs are ER targets, and BRCA1 tumours are ER$-$ and therefore are likely to show lower expression of ER targets compared to
BRCA2 tumours, which are express ER$+$~\cite{Lakhani1998}.

Finally, we also noticed striking overlaps between the genes and/or pathways enriched in pancreatic tumour and SCI. For example, 
Pten (O08586), Myd88 (P22366), Wnt4 (P22724), Tnfrsf1b (P25119), Atm (Q62388), Bcl3 (Q9Z2F6) and Jak2 (Q62120) are involved in TGF-$\beta$, Wnt and JAK-STAT signalling and have been
implicated in pancreatic tumours~\cite{Jones2008}. This supports the close nexus between pancreatic tumourigenesis and neuronal response and development~\cite{Biankin2012}.


\section{Discussion}

\subsection{Why \emph{minimum} gene flips makes sense}
We argue using a simple yet intuitive example why we select the minimum number of genes (instead of, say, the maximum) to be flipped to determine driver genes.
Consider a gene $t$ (say, a transcription factor) that interacts with $m$ genes, $\{g_1,g_2,...,g_m\}$ (its targets), in the network under condition $\Omega$. 
Now suppose that a change in the expression level of $t$ (and not of the $m$ genes)
results in the interactions $\mathcal{E}_{tg} = \{(t,g_1),(t,g_2),...,(t,g_m)\}$ becoming UNSATISFIED upon transit to condition $\Psi$.
To resatisfy $\mathcal{E}_{tg}$, we could either flip $t$ or each of the $m$ genes.
However, in this case, flipping the maximum set of genes (the $m$ genes) instead of the minimum (only $t$) identifies the incorrect set of driver genes.
Therefore, by flipping the minimum set, we always attempt to identify the genes that are ``more" responsible for driving the transition.

Note that selecting the minimum set tends to favor hubs. Therefore, our model agrees more with Nepusz and Viscek~\cite{Nepusz2012} and Nacher and Akutsu~\cite{Nacher2013}
than with Liu Yang et al.~\cite{LiuY2011}. Since many of the hubs in PPI networks correspond to essential proteins~\cite{Kang2010}, and because many of these hubs that we
found were CDKs that act as biological ``ON/OFF" switches, it is possible that our flipped genes are indeed important proteins involved in rewiring of the PPI network.

\subsection{Cancer robustness partly stems from `passing of the baton' between genes}

Although the experiments presented in this work are still preliminary, based on our findings (Figures~\ref{fig4}d and \ref{fig6}) we hypothesize that robustness of cancer partly
stems from the fact that genes from different
biological processes and/or cellular components are involved in different stages (timepoints) during tumour progression.
As a result of this constant ``passing of the baton" between the genes, tumours can evade therapy if the genes that are targeted at a particular timepoint 
are no longer driving the tumour (i.e. have passed on the baton to other genes) or are not yet involved in the tumour (i.e. not yet received the baton) at that timepoint.

Having said that, there is a certain sequence in which genes are involved in the tumour, and therefore deciphering this sequence will be crucial to develop
effective anti-cancer therapies. Applying BoolSpace, we can identify the genes driving the tumour at different stages during tumour progression, and by identifying a ``cover set" of these genes 
(e.g. the cover set contains at least one gene from each transition) that can be simultaneously targeted, we should be able to
break the robustness of the tumour.

\subsection{A distance measure between tumour stages}
It is not hard to see that the (minimum) number of genes flipped between stages is a \emph{metric} because it essentially is the Hamming distance between Boolean vectors for the stages.
Therefore, the idea of using the minimum number of genes flipped as a `distance' measure between tumour stages in the Boolean state space, in which stages that are more (biologically) similar are
placed closer in the state space compared to stages that are less (biologically) similar, is worth further exploration.
It is interesting to check if this distance captures (biological) differences between tumours or tumour stages.

\section{Conclusion}
Cancer forms a robust system by maintaining stable functioning (cell proliferation and sustenance) despite perturbations (e.g. drug targeting)~\cite{Kitano2004}.
Inherent to this robustness is the continuous progression or change in system characteristics so as to constantly evade system failure inflicted through perturbations.
Therefore, identifying genes driving this progression is critical to develop effective anti-cancer therapies.

In this work, we have proposed a novel model called BoolSpace to track the progression of cancer in a Boolean state space.
In this state space, a Boolean network, constructed by integrating PPI and gene-expression datasets, transits between Boolean satisfiability states by 
editing interactions and flipping genes. We hypothesize that the minimum number of genes flipped in response to edits in interactions corresponds to the genes driving these transitions.
To identify these flipped genes, we propose an optimization problem called {\sc min flip} and a fixed-parameter tractable algorithm to solve the problem efficiently.
Experiments on three case studies -- pancreatic and breast tumours in human and spinal-cord injury in rats -- suggest that many of the identified genes are involved in tumourigenic activity.
Several of these genes are serine/threonine kinases that act as biological ``ON/OFF switches" within cells and are involved in key cell cycle, proliferation, apoptosis and differentiation processes.
Finally, we hypothesize that cancer robustness partly stems from ``passing of the baton" between genes responsible for driving different stages of the tumour, and therefore an effective therapy should likely
target a ``cover set" of genes across a succession of stages to break the robustness of cancer.


\subsection* {\em Acknowledgments}
We thank Dr Ashish Anand (IIT G) for valuable discussions, and the 
anonymous reviewers for their valuable suggestions.
Funding: SS is supported under an Australian National Health and Medical Research Council (NHMRC) grant 1028742 to Dr Peter T. Simpson and MAR.


\ifCLASSOPTIONcaptionsoff
  \newpage
\fi



%

%

\begin{IEEEbiography}{Sriganesh Srihari}
Sriganesh Srihari is a Research Officer (post-doctoral fellow) at the Institute for Molecular Bioscience, the University of Queensland, Australia.
He received his PhD from the National University of Singapore, his MSc from Nanyang Technological University Singapore and his BTech from
National Institute of Technology India, all in computer science.
His research interests include computational biology and bioinformatics, systems biology, data mining and databases.
\end{IEEEbiography}

\begin{IEEEbiography}{Venkatesh Raman}
Venkatesh Raman did his PhD in computer science at the University of
Waterloo, Canada and since then he has been a faculty at the Institute of
Mathematical Sciences, Chennai. His research interests include space efficient data structures and
parameterized complexity. Homepage: \url{http://www.imsc.res.in/~vraman/}
\end{IEEEbiography}

\begin{IEEEbiography}{Hon Wai Leong}
Hon Wai Leong is an Associate Professor in the Department of Computer Science at the National University of Singapore. 
He received the B.Sc. (Hon) degree in Mathematics from the University of Malaya and the Ph.D. degree in Computer Science from the University of Illinois at Urbana-Champaign. 
His research interest is in the design of optimization algorithms for problems from diverse application areas including VLSI-CAD, transportation logistics, multimedia systems, and 
computational biology. In computational biology, his current interests includes computational proteomics, fragment assembly, comparative genomics, and analysis of PPI networks. 
He has a passion for nurturing young talents and gives many workshop on creative problem solving and computational thinking. 
In 1992, he started the Singapore training program for the IOI (International Olympiad in Informatics). 
He is a member of ACM, IEEE, ISCB, and a Fellow of the Singapore Computer Society.
Homepage: \url{http://www.comp.nus.edu.sg/~leonghw/}
\end{IEEEbiography}

\begin{IEEEbiography}{Mark A. Ragan}
Mark A. Ragan received his Bachelor degree in Biochemistry from the University of Chicago, USA and his PhD degree in Biology from Dalhousie University, Halifax, Canada. He is currently at The University of Queensland, where he is founding Head of Genomics and Computational Biology at the Institute for Molecular Bioscience, and Affiliate Professor in the School of Information Technology and Electrical Engineering. He is also Director of the Australian Research Council (ARC) Centre of Excellence in Bioinformatics. His research areas include comparative and computational genomics and systems biology.
\end{IEEEbiography}

\end{document}